\newcommand{\realrange}[2]{\left[#1, #2\right]}
\newcommand{\unitrange}[2]{\realrange{0}{1}}
\newcommand{\llabel}[1]{\label{\labelprefix:#1}}
\newcommand{\labelprefix}{} 
\newcommand{\discussionsize}{\small}
\newenvironment{code}{\noindent
\begin{tabbing}%
\hspace{1em}\=\hspace{1em}\=\hspace{1em}\=\hspace{1em}\=\hspace{1em}\=%
\hspace{1em}\=\hspace{1em}\=\hspace{1em}\=\hspace{1em}\=\hspace{1em}\=%
\kill}{\end{tabbing}}
\newcommand{\labelcommand}{}
\newcommand{\captiontext}{}
\newsavebox{\codeparam}
\newcounter{lineNumber}
\newenvironment{disscodepos}[3]{%
\renewcommand{\labelcommand}{#2}%
\renewcommand{\captiontext}{#3}%
\sbox{\codeparam}{\parbox{\textwidth}{#3}}%
\begin{figure}[#1]\begin{center}\begin{code}\setcounter{lineNumber}{1}}{%
\end{code}\end{center}\caption{\llabel{\labelcommand}\captiontext}\end{figure}}
\newdimen\endofsize\endofsize=0.5em
\def\endofbeweis{~\quad\hglue\hsize minus\hsize
                 \hbox{\vrule height \endofsize width
\endofsize}\par}
\def\see{{\thicksim}}
\newcommand{\ignore}[1]{}
\title{Improved Approximations for Guarding \\ 1.5-Dimensional Terrains }
\author{Khaled Elbassioni\thanks{Max-Planck-Institut f\"ur
	Informatik, Saarbr\"ucken, Germany.}
        \and
        Domagoj Matijevi\'c\thanks{Department of Mathematics,
		J. J. Strossmayer University of Osijek, Croatia.}
		\and
		Juli\'{a}n Mestre\footnotemark[1]\ \thanks{Research supported by an
                  Alexander von Humboldt Fellowship.}
    \and
		Domagoj \v{S}everdija\footnotemark[2]
}
\date{}
\newcommand{\sees}{\thicksim}
\newcommand{\qed}{\hfill$\square$\bigskip}
\newcommand{\hide}[1]{}
\newtheorem{theorem}{Theorem}
\newtheorem{lemma}{Lemma}
\newtheorem{proposition}{Proposition}
\begin{document}

\maketitle

\begin{abstract}
  We present a 4-approximation algorithm for the problem of placing the fewest
  guards on a 1.5D terrain so that every point of the terrain is seen by at
  least one guard.  This improves on the currently best approximation factor
  of 5 due to {\it King (LATIN 2006, pages 629-640)}. Unlike previous
  techniques, our method is based on rounding the linear programming
  relaxation of the corresponding covering problem. Besides the simplicity of
  the analysis, which mainly relies on decomposing the constraint matrix of
  the LP into totally balanced matrices, our algorithm, unlike previous work,
  generalizes to the weighted and partial versions of the basic problem.
\end{abstract}

\section{Introduction}
\label{sec:introduction}

In the \emph{1.5D terrain guarding} problem we are given a polygonal region in
the plane determined by an $x$-monotone polygonal chain, and the objective is
to find the minimum number of guards to place on the chain such that every
point in the polygonal region is guarded. This kind of guarding problems and
its generalizations to 3-dimensions are motivated by optimal placement of
antennas for communication networks; for more details see
\cite{conf/cccg/ChenEU95,journals/siamcomp/Ben-MosheKM07} and the references
therein.

One can easily see that one point is enough to guard the polygonal region if
we are allowed to select guards anywhere in the plane. However, the problem
becomes interesting if guards can only be placed on the boundary chain. Under
this restriction, two natural versions of the problem arise: in the
\emph{continuous} version the guards can be placed anywhere along the chain
and all points in the terrain must be guarded, while in the \emph{discrete}
version the guards and points to be guarded are arbitrary subsets of the
chain.

\subsection{Previous Work}
\label{sec:prev-work}

Chen et al.~\cite{conf/cccg/ChenEU95} claimed that the 1.5D-terrain guarding
problem is NP-hard, but a complete proof of the claim was never published
\cite{conf/cccg/DemaineO06,conf/latin/King06,journals/siamcomp/Ben-MosheKM07}. They
also gave a linear time algorithm for the \emph{left-guarding} problem, that
is, the problem of placing the minimum number of guards on the chain such that
each point of the chain is guarded from its left.  Based on purely geometric
arguments, Ben-Moshe et al.~\cite{journals/siamcomp/Ben-MosheKM07} gave the
first constant-factor approximation algorithm for the 1.5D-terrain guarding
problem. Although they did not state the value of the approximation ratio
explicitly, it was claimed to be at least 6
in~\cite{conf/latin/King06}. Clarkson et al.~\cite{conf/compgeom/ClarksonV05}
gave constant factor approximation algorithms for a more general class of
problems using $\epsilon$-nets and showed that their technique can be used to
get a constant approximation for the 1.5D-terrain guarding problem. Most
recently, King~\cite{conf/latin/King06} claimed that the problem can be
approximated with a factor of 4, but the analysis turned out to have an error
that increases the approximation factor to $5$ \cite{misc/errata/King}.

\subsection{Our results and outline of the paper}
\label{sec:our-results}

The main building block of our algorithms is an LP-rounding algorithm for
one-sided guarding: A version of the problem where a guard can see either to
the left or to the right. Guided by an optimal fractional solution, we can
partition the points into those that should be guarded from the left, and
those that should be guarded from right. This turns out to be a very useful
information since we can show that the LPs for the left-guarding and
right-guarding problems are integral. We prove this by establishing a
connection between the guarding problem and totally balanced covering problems
that is of independent interest. Altogether, this leads to a factor 2
approximation for one-sided guarding. Then we show how to reduce other
variants of the problem to the one-sided case by incurring an extra factor of
2 in the approximation ratio.

A nice feature of this framework is that the algorithms emanating from it, as
well as their analysis, are very simple. This comes in contrast with the
relatively complicated algorithms of
\cite{journals/siamcomp/Ben-MosheKM07,conf/latin/King06} whose
description/analysis involves a fairly long list of cases.  In addition, our
framework allows us to tackle more general versions of the problem than those
considered in the literature thus far; for example, guards can have weights
and we want to minimize the weight of the chosen guards, or where we are not
required to cover all the terrain, but only a prescribed fraction of it. It
seems that such variants are very difficult to deal with, if one tries to use
only geometric techniques as the ones used in
\cite{journals/siamcomp/Ben-MosheKM07,conf/latin/King06} for the basic
problem.

It is worth noting that the idea of using the fractional solution to the
LP-covering problem to partition the problem into several integral subproblems
has been used before
\cite{journals/orl/HassinS08,conf/stacs/Mestre08,journals/jal/GaurIK02}.

In the next section, we
define the basic guarding problem and its variants
more formally. In Section~\ref{sec:left-guarding} we focus on the left
guarding problem and show that this is a totally balanced covering problem.
Section~\ref{sec:one-sided-guarding} shows how to get a 2-approximation for
one-sided guarding. Finally, in Section~\ref{sec:applications} we apply these
results to obtain constant-factor approximation algorithms for more general variants of
the guarding problem.

\section{Preliminaries}
\label{sec:preliminaries}

A \emph{terrain} $T$ is an $x$-monotone polygonal chain with $n$ vertices,
i.e., a piecewise linear curve intersecting any vertical line is in at most
one point. Denote by $V$ the vertices of $T$ and by $n=|V|$ the complexity of
the chain. The terrain polygon $P_T$ determined by $T$ is the closed region in
the plane bounded from below by $T$.

For two points $p$ and $q$ in $P_T$, we say that $p$ \emph{sees} $q$ and
write $p \sees q$, if the line segment connecting $p$ and $q$ is contained
in $P_T$, or equivalently, if it never goes strictly below $T$.

The $1.5D$-terrain guarding problem for $T$ is to place guards on $T$ such
that every point $p\in P_T$ is seen by some guard. One can easily see, by the
monotonicity of $T$, that any set of guards that guards $T$ is also enough to
guard $P_T$. Henceforth we restrict our attention to the case when the
requirement is to guard all points of $T$.

The \emph{continuous} $1.5D$-terrain guarding problem is to select a smallest
set of guards $A \subseteq T$ that sees every point in $T$; in other words,
for every $p \in T$ there exists $g \in A$ such that $g \sees p$. We also
consider the following variants of this basic problem:

\begin{enumerate}
  \item In the \emph{discrete} version we are given a set of possible
  guards $G \subseteq T$ with weights $w: G \rightarrow R^+$ and a set of
  points $N \subseteq T$. The goal is to select a minimum weight set of
  guards $A \subseteq G$ to guard $N$.

  \item In the \emph{partial} version we are given a profit function $p: N
  \rightarrow R^+$ and a budget $b$. The goal is to find a minimum weight set of
  guards such that the profit of unguarded points is at most $b$. In the
  continuous variant, $b$ is the length of $T$ that can be left unguarded.

  \item In the \emph{one-sided guarding} version the guards can see in only one
  of two directions: left or right. Specifically, given $3$ sets of points $N$, $G_L$ and $G_R$,
  we want to find sets $A_L\subseteq G_L$ and $A_R\subseteq G_R$ of guards
  such that for all $p \in N$ there is $g \in A_L$ such that $g < p$ and $g
  \sees p$, or $g \in A_R$ such that $g > p$ and $g \sees p$. The sets $G_L$
  and $G_R$, and hence $A_L$ and $A_R$ need not be disjoint. The overall cost of the solution
  is $w(A_L) + w(A_R)$.

  This includes both the \emph{left-} and {\em right-}guarding versions
  where guards in the given set $G$ can see only from the left, respectively, right
  (setting $G_L=G$ and $G_R=\emptyset$ we get the left-guarding problem, while
  setting $G_R=G$ and $G_L=\emptyset$ gives the right-guarding problem).
\end{enumerate}

Using a unified framework we get 4-approximations for nearly all\footnote{The
  only exception is instances of the discrete variant when $G \cap N \neq
  \emptyset$. Here we get a 5-approximation.} of these variants.  Our approach
is based on linear programming, totally balanced matrices, and the paradigm of
rounding to an integral problem
\cite{journals/jal/GaurIK02,journals/orl/HassinS08}. We progressively build
our approximations by reducing each variant to a simpler problem. First, we
start establishing a connection between the left-guarding problem and totally
balanced matrices. Then, we show how to use this to get a 2 approximation for
the one-sided guarding. Finally, we show how the latter implies a 4
approximations for other variants.

Throughout the paper we will make frequent use of the following easy-to-prove
claim.

\begin{lemma}[\cite{journals/siamcomp/Ben-MosheKM07}] \label{lem:Mitchell}
  Let $a<b<c<d$ be four points on $T$. If $a\sees c$ and $b\sees d$,
  then $a\sees d$.
\end{lemma}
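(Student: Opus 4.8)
The plan is to work entirely with the height functions of the relevant segments over the $x$-axis, exploiting the $x$-monotonicity of $T$. Write $y_p$ for the ordinate of a point $p$, and for a segment let $\overline{pq}(x)$ denote its height at abscissa $x$. Since $T$ is $x$-monotone, $p \sees q$ holds exactly when $\overline{pq}$, viewed as an affine function of $x$, stays on or above the graph of $T$ on the interval spanned by $p$ and $q$ (being above $T$ suffices, as $P_T$ is bounded only from below). So I will regard $\overline{ac}$, $\overline{bd}$, and $\overline{ad}$ as affine functions of $x$ and compare them pointwise.

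First I would locate a crossing of $\overline{ac}$ and $\overline{bd}$. Because $a \sees c$, the segment $\overline{ac}$ lies on or above $T$ on $[x_a,x_c]$; evaluating at $b \in T$ gives $\overline{ac}(x_b) \ge y_b$, and since $\overline{bd}$ passes through $b$ this reads $\overline{ac}(x_b) \ge \overline{bd}(x_b)$. Symmetrically, from $b \sees d$ and $c \in T$ we get $\overline{bd}(x_c) \ge y_c = \overline{ac}(x_c)$. By the intermediate value theorem the two affine functions agree at some $x_X \in [x_b, x_c]$; call the common value $y_X$.

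The crux is to show that this crossing point lies on or below $\overline{ad}$, that is, $\overline{ad}(x_X) \ge y_X$. For this I note that to the right of the crossing $\overline{bd}$ dominates $\overline{ac}$: the two are affine, equal at $x_X$, and $\overline{bd}(x_c) \ge \overline{ac}(x_c)$ with $x_c > x_X$, so $\overline{bd}(x) \ge \overline{ac}(x)$ for all $x \ge x_X$. Evaluating at $x_d$ yields $y_d = \overline{bd}(x_d) \ge \overline{ac}(x_d)$, which is precisely the statement that $d$ lies on or above the line through $a$ and $c$; equivalently the slope of $\overline{ad}$ is at least that of $\overline{ac}$. Since both segments emanate from $a$, this gives $\overline{ad}(x_X) \ge \overline{ac}(x_X) = y_X$.

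Finally I would glue the two pieces. On $[x_a, x_X]$ the affine functions $\overline{ad}$ and $\overline{ac}$ agree at $x_a$ and satisfy $\overline{ad}(x_X) \ge \overline{ac}(x_X)$, so $\overline{ad} \ge \overline{ac} \ge T$ throughout; on $[x_X, x_d]$ the affine functions $\overline{ad}$ and $\overline{bd}$ agree at $x_d$ and satisfy $\overline{ad}(x_X) \ge \overline{bd}(x_X)$, so $\overline{ad} \ge \overline{bd} \ge T$ throughout (a difference of affine functions nonnegative at both endpoints of an interval is nonnegative on it). Hence $\overline{ad}$ stays on or above $T$ on all of $[x_a, x_d]$, which is exactly $a \sees d$. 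I expect the main obstacle to be the crux step, namely verifying that the crossing point sits below $\overline{ad}$; once the crossing is in hand the remaining gluing is a routine comparison of affine functions at interval endpoints.
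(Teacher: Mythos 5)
The paper never proves this lemma---it is imported as a black box from Ben-Moshe et al.~\cite{journals/siamcomp/Ben-MosheKM07} (the ``Order Claim''), so there is no internal proof to compare against; your argument has to stand on its own, and it does. The reduction of visibility to pointwise domination of height functions is legitimate: $x$-monotonicity makes every segment between terrain points the graph of an affine function and makes $P_T$ the region on or above $T$. The two hypotheses then give $\overline{ac}(x_b)\ge \overline{bd}(x_b)$ and $\overline{bd}(x_c)\ge \overline{ac}(x_c)$, forcing a crossing $x_X\in[x_b,x_c]$; the ``crux'' step correctly deduces $\overline{ad}(x_X)\ge y_X$; and the gluing step (an affine function nonnegative at both endpoints of an interval is nonnegative on it) is sound, with the visibility hypotheses covering exactly the subintervals you need since $x_b \le x_X \le x_c$. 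The only loose point is the justification that $\overline{bd}$ dominates $\overline{ac}$ to the right of the crossing ``with $x_c > x_X$'': the intermediate value theorem only guarantees $x_X \in [x_b,x_c]$, and in the degenerate case $x_X = x_c$ (i.e., $c$ lies on $\overline{bd}$) your three-point argument collapses to two points. The fix is one line and makes the strict inequality unnecessary: the difference $\overline{bd}-\overline{ac}$ is affine, is $\le 0$ at $x_b$ and $\ge 0$ at $x_c$ with $x_b < x_c$, hence has nonnegative slope, so it is nonnegative to the right of any of its zeros. Conceptually your proof is the same as the classical one: the crossing point $x_X$ turns the concatenated path from $a$ to the crossing (along $\overline{ac}$) and on to $d$ (along $\overline{bd}$) into a convex piecewise-linear path lying above $T$, whose chord $\overline{ad}$ must therefore also lie above $T$; your two-interval endpoint comparison is exactly that convexity argument written out explicitly.
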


Let $S(p)=\{g\in G \mid g \sees p \}$ be the set of guards that
see point $p\in N$.  Denote by $S_L(p)=\{q\in G \mid g < p \textrm{
and } g\sees p \}$ the set of guards that see $p$ strictly from
the left, and analogously by $S_R(p)$ the set of guards that see $p$
strictly from the right.

\section{Left-guarding and totally balanced matrices}
\label{sec:left-guarding}

Even though this section deals exclusively with the left-guarding version, it
should be noted that everything said applies, by symmetry, to the
right-guarding version. Recall in this case that we are given two sets of points $N,G,$ where each point in $N$
has to be guarded using only guards from $G$ that lie strictly to its left.

Consider the following linear programming formulation.

\begin{gather} \label{LP:left} \tag{LP1}
  \text{minimize} \ \sum_{g\in G} w_g\, x_g \\[-6ex] \notag
\end{gather}
\hspace{1.5cm} subject to \\[-3ex]
\begin{align}
  \label{LP-left:coverage-constraint}
  \sum_{g\in S_L(p)} x_{g}\ & \ge 1 & \forall p\in N \\
  \notag
  x_g\ & \geq 0 & \forall g\in G
\end{align}

Variable $x_g$ indicates whether $g$ is chosen as a guard. Constraint
\eqref{LP-left:coverage-constraint} asks that every point is seen by some
guard from the left.

Let $A\in\{0,1\}^{n \times m}$ be a binary matrix. Call $A$ a
\emph{left-visibility} matrix if it corresponds to the element-set incidence
matrix of the coverage problem defined by \eqref{LP:left} for some instance of
the left-guarding problem. Also, $A$ is said to be \emph{totally
  balanced}~\cite{journal/mp/Berge72} if it does not contain a square
submatrix with rows and columns sums equal to 2 and no identical columns.
Finally, $A$ is in standard greedy form if it does not contain as an induced
submatrix
\begin{equation} \label{eq:forbidden}
  \left[\begin{array}{cc} 1 & 1 \\ 1 & 0 \end{array} \right].
\end{equation}

An equivalent characterization~\cite{journal/SIADM/HoffmanKS85} is that $A$ is
totally balanced if and only if $A$ can be put into greedy standard form by
permuting its rows and columns.

\begin{lemma} \label{lem:split}
  Any left-visibility matrix is totally balanced.
\end{lemma}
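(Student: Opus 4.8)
The plan is to exhibit an explicit reordering of the rows and columns that puts any left-visibility matrix into greedy standard form; by the characterization of~\cite{journal/SIADM/HoffmanKS85} recalled above, this is equivalent to total balancedness, so it suffices to produce one good ordering. Concretely, I would index the rows by the points of $N$ sorted by increasing $x$-coordinate (topmost row $=$ leftmost point) and the columns by the guards of $G$ sorted by \emph{decreasing} $x$-coordinate (leftmost column $=$ rightmost guard). The whole argument then reduces to showing that, under this ordering, the forbidden pattern~\eqref{eq:forbidden} never occurs as an induced $2\times2$ submatrix, and the only tool needed is Lemma~\ref{lem:Mitchell}.

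Suppose, for contradiction, that~\eqref{eq:forbidden} appears on two rows corresponding to points $P,Q$ with $P<Q$ (row of $P$ above that of $Q$), and two columns corresponding to guards $G,H$ with $H<G$ (column of $G$ left of that of $H$). Reading off the pattern $\left[\begin{smallmatrix}1&1\\1&0\end{smallmatrix}\right]$, the three $1$-entries say $G\sees P$, $H\sees P$, and $G\sees Q$, while the lone $0$-entry says that $H$ does \emph{not} left-guard $Q$. The first key point is that the sign convention of the left-visibility matrix does the ordering work for free: the entry $(P,G)=1$ forces $G<P$, and combined with $H<G$ and $P<Q$ this pins the four elements down to the single order $H<G<P<Q$, so in particular every guard lies strictly to the left of every point. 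In turn $H<Q$, whence the vanishing entry $(Q,H)=0$ must be caused by $H\not\sees Q$ rather than by a position violation.

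Now I would invoke Lemma~\ref{lem:Mitchell} with $a=H$, $b=G$, $c=P$, $d=Q$: the hypotheses $a\sees c$ (that is, $H\sees P$) and $b\sees d$ (that is, $G\sees Q$) are exactly two of the three $1$-entries, and the conclusion $a\sees d$ gives $H\sees Q$, contradicting the $0$-entry. Hence no submatrix of the form~\eqref{eq:forbidden} can occur, the reordered matrix is in greedy standard form, and Lemma~\ref{lem:split} follows. I expect the only real subtlety to be choosing the orientation correctly: the forbidden pattern is not symmetric, so the rows and columns must be sorted in opposite directions, precisely so that the positional constraints carried by the three $1$-entries line the four terrain points up in the increasing order demanded by Lemma~\ref{lem:Mitchell}. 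Everything else is bookkeeping, and by reflecting the terrain the same argument applies verbatim to right-visibility matrices.
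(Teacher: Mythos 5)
Your proposal is correct and is essentially identical to the paper's own proof: same row/column ordering (points left to right, guards right to left), same derivation of the order $H<G<P<Q$ from the three $1$-entries, and the same application of Lemma~\ref{lem:Mitchell} with $(a,b,c,d)=(H,G,P,Q)$ to contradict the $0$-entry. Your extra remark that the $0$-entry must mean $H\not\sees Q$ (rather than a positional violation) is a small point the paper leaves implicit, but the argument is the same.
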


\begin{proof}
  Let $A$ be a left-visibility matrix. We show how to put $A$ into standard
  greedy form. Permute the rows and columns of $A$ such that all points are
  ordered from left to right and all guards are ordered from right to
  left. Suppose that there exists an induced $2\times 2$ sub-matrix of the
  form \eqref{eq:forbidden}, whose rows are indexed by $p_1,p_2\in N$, and
  whose columns are indexed by $g_1,g_2\in G$. Then we have the following
  order: $g_2<g_1<p_1<p_2$. Now we apply Lemma~\ref{lem:Mitchell}
  with $a=g_2$, $b=g_1$, $c=p_1$ and $d=p_2$ to arrive at the
  contradiction $p_2\sees g_2$. \qed
\end{proof}

It is known that for a totally balanced matrix $A$, the
polyhedron $\{x\geq 0~:~Ax\geq 1\}$ is integral. Furthermore, there is an
efficient purely combinatorial algorithm for finding an optimal integral
solution to \eqref{LP:left} due to Kolen \cite{thesis/Kolen82}. Indeed, in the
next subsection we show that this algorithm translates into an extremely
simple procedure for the uniform weight case, i.e., when $w_g = 1$ for all $g
\in G$.

\subsection{Uniform left-guarding}

For each point $p\in N$ let $L(p)$ denote the left-most guard that
sees $p$. Consider the simple greedy algorithm on the set of points $N$ shown
below: Points in $N$ are scanned from left to right
and when we find an unguarded point $p$, we select $L(p)$ as a guard.

\begin{center}
  \begin{minipage}{5em}
    \begin{code}
      {\sc left-guarding} $(T,N,G)$ \\
      $A \leftarrow \emptyset$ \\
      {\bf for} $p\in N$ processed from left to right \\
      \>  {\bf if} $p$ is not yet seen by $A$ {\bf then} \\
      \> \> $A \leftarrow A \cup \{L(p)\}$ \\
      {\bf return} A
    \end{code}
  \end{minipage}
\end{center}

The algorithm can be implemented in $O(|N|\log|G|)$ time using a procedure
similar to Graham's scan~\cite{journals/ipl/Graham72} for convex-hull
computation. To see the it returns an optimal solution, let $X \subseteq N$ be
those points that force the algorithm to add a guard. Suppose, for the sake of
contradiction, that there exist two points $p'$ and $p''$ in $X$ that are seen
from the left by the same guard $g \in G$, in other words, $g< p'<p''$ and $g
\sees p'$ and $g \sees p''$. Let $g'=L(p')$, and note that $g'
\le g$. If $g' = g$ then $g'\sees p''$ and therefore $p''$ would have not been
unguarded when it was processed. Hence $g'<g$, but Lemma~\ref{lem:Mitchell}
tells us that $g'\sees p''$ and we get a contradiction. Therefore, each guard
in $G$ can see at most one point in $X$, which means $|X|$ is a lower bound on
the optimal solution. Since the cardinality of $A$ equals that of $X$, it
follows that $A$ is optimum.

\section{A $2$-approximation for one-sided guarding}
\label{sec:one-sided-guarding}

In this section we study discrete weighted one-sided guarding. Recall that in
this variant, we are given a set of points $N$ and two sets of guards $G_L$
and $G_R$, where each guard in $G_L$ (respectively, $G_R$) can only guard
points from $N$ strictly to its right (respectively, strictly to its left).
We assume without loss of generality that, each point in $N$ can be either
seen by a guard on its left or by a guard on is right, for otherwise, it must
be guarded by itself and the system is infeasible, a situation which can be
discovered in a preprocessing step.

We state our main result and then describe the algorithm.

\begin{theorem} \label{thm:one-sided}
  There is a 2-approximation algorithm for discrete one-sided guarding.
\end{theorem}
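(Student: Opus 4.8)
The plan is to round the linear-programming relaxation of the combined one-sided problem and to use the resulting fractional solution to split $N$ into a pure left-guarding instance and a pure right-guarding instance, each of which has an integral LP by Lemma~\ref{lem:split} (and its mirror image for right-guarding).

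I would first write down the natural relaxation. Introducing a variable $x_g^L$ for each $g\in G_L$ and $x_g^R$ for each $g\in G_R$, the LP minimizes $\sum_{g\in G_L} w_g\,x_g^L + \sum_{g\in G_R} w_g\,x_g^R$ subject to the coverage constraint $\sum_{g\in S_L(p)} x_g^L + \sum_{g\in S_R(p)} x_g^R \ge 1$ for every $p\in N$, together with nonnegativity, where the $S_L$-sum ranges over $G_L$ and the $S_R$-sum over $G_R$. Keeping the two families of variables separate is what makes the later cost bookkeeping clean even when $G_L\cap G_R\neq\emptyset$. Let $x^*$ be an optimal fractional solution and let $\OPT$ denote the optimal integral value, so that the LP value is at most $\OPT$.

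The partitioning step is the heart of the argument. For each $p\in N$ the coverage constraint forces at least one of the two partial sums to be at least $1/2$; I place $p$ in $N_L$ if $\sum_{g\in S_L(p)} x_g^{L*}\ge 1/2$ and in $N_R$ otherwise, so that $\sum_{g\in S_R(p)} x_g^{R*}\ge 1/2$. This yields a partition $N=N_L\cup N_R$. I then consider the pure left-guarding instance $(N_L,G_L)$ and observe that $2x^{L*}$ is feasible for its LP~\eqref{LP:left}, since for every $p\in N_L$ we have $\sum_{g\in S_L(p)} 2x_g^{L*}\ge 1$. By Lemma~\ref{lem:split} the constraint matrix of this instance is totally balanced, hence its covering polyhedron is integral; running Kolen's algorithm therefore produces an integral guard set $A_L\subseteq G_L$ covering $N_L$ with $w(A_L)\le 2\sum_{g\in G_L} w_g\,x_g^{L*}$. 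The symmetric argument applied to $(N_R,G_R)$ yields $A_R\subseteq G_R$ covering $N_R$ with $w(A_R)\le 2\sum_{g\in G_R} w_g\,x_g^{R*}$.

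Finally I would combine and bound the cost. Since $N_L\cup N_R=N$, the pair $(A_L,A_R)$ is a feasible one-sided guarding solution, and $w(A_L)+w(A_R)\le 2\bigl(\sum_{g\in G_L} w_g\,x_g^{L*} + \sum_{g\in G_R} w_g\,x_g^{R*}\bigr)=2\cdot(\text{LP value})\le 2\,\OPT$, giving the claimed $2$-approximation. There is no genuinely hard obstacle here, because the integrality of totally balanced covering polyhedra and the combinatorial algorithm for solving them are already in hand; the only points requiring care are checking that the scaled restricted vector is feasible for each subproblem (this is exactly the $1/2$-thresholding) and keeping the left and right variable costs separate so that the two halves of the bound sum to the full LP value.
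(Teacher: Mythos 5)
Your proposal is correct and follows essentially the same route as the paper: the same LP relaxation, the same $1/2$-threshold split of $N$ into left- and right-guarded points, integrality of each one-sided subproblem via Lemma~\ref{lem:split} (total balancedness) solved by Kolen's algorithm, and the same cost accounting summing to twice the LP value. The only cosmetic difference is that you make $N_L$ and $N_R$ a strict partition while the paper allows them to overlap, which changes nothing in the analysis.
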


Consider the following LP for finding the optimal set of left and right guards:

\begin{gather} \label{LP:one-sided} \tag{LP2}
  \text{minimize} \ \sum_{g\in G_L} w_g\, x_{g,L} +  \sum_{g\in G_R}w_g\, x_{g,R} \\[-6ex] \notag
\end{gather}
\hspace{1.5cm} subject to \\[-3ex]
 \begin{align}
   \label{LP2:coverage-constraint}
\sum_{g\in S_L(p)\cap G_L} x_{g,L}\ + \sum_{g \in S_R(p)\cap G_R} x_{g,R}\  & \ge 1 & \forall p\in N \\
   \notag
   x_{g,L}\ & \ge 0 & \forall g\in G_L \\
   \notag
   x_{g,R} \ & \ge 0 & \forall g \in G_R
\end{align}

Variable $x_{g,L}$ indicates whether $g$ is chosen in $A_L$ and $x_{g,R}$
indicates whether $g$ is chosen in $A_R$.  Constraint
\eqref{LP2:coverage-constraint} asks that every point is seen by some guard,
either from the left or from the right.

The algorithm first finds an optimal fractional solution $x^*$
to~\eqref{LP:one-sided}. Guided by $x^*$, we divide the points into two
sets
\begin{align*}
  N_L &= \left \{p\in N \mid \textstyle \sum_{g\in S_L(p)\cap G_L} x_{g,L}^* \ge
    \frac12 \right\} \text{, and} \\
  N_R &= \left\{p\in N \mid \textstyle \sum_{g\in S_R(p)\cap G_R}x^*_{g,R} \ge
    \frac12 \right\}.
\end{align*}

Using the results from Section~\ref{sec:left-guarding}, we solve optimally the
left-guarding problem for the pair $(N_L,G_L)$ and the right-guarding problem
for the pair $(N_R,G_R)$. This gives us two sets of guards $A_L^*$ and $A_R^*$.
The final solution is a combination of these two.

It is easy to construct examples where solving separately the left-guarding
and right-guarding problems and then taking the minimum of these two solutions
is arbitrarily far from the optimal value. The intuition behind the algorithm
is to use the LP solution to determine which points should be guarded from the
left and which should be guarded from the right. The fractional solution also
allows us to bound the cost of $A_L^*$ and $A_R^*$.

\begin{lemma} \label{lem:AL}
  Let $A_L^*$ and $A_R^*$ be optimal solutions for the pairs $(N_L,G_L)$ and $(N_R,G_R)$
  respectively. Then $w(A_L^*) \leq 2 \sum_{g \in
    G_L} w_g\, x^*_g$ and $w(A_R^*) \leq 2 \sum_{g \in G_R} w_g \, x^*_g$.
\end{lemma}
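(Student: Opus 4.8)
The plan is to prove the bound for $A_L^*$; the bound for $A_R^*$ then follows by the symmetric argument with left and right interchanged. The key observation is that, by Lemma~\ref{lem:split} together with the integrality of the polyhedron $\{x\geq 0 : Ax \geq 1\}$ for a totally balanced matrix $A$, the optimal integral left-guarding solution on the pair $(N_L,G_L)$ has cost exactly equal to the optimal value of the left-guarding relaxation \eqref{LP:left} for that instance. Hence $w(A_L^*)$ is bounded above by the cost of \emph{any} feasible fractional solution to \eqref{LP:left} on $(N_L,G_L)$, and it suffices to exhibit one whose cost is at most $2\sum_{g\in G_L} w_g\, x^*_{g,L}$.

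The natural candidate is obtained by simply scaling up the left-components of the optimal solution $x^*$ to \eqref{LP:one-sided}: set $y_g \Def 2\,x^*_{g,L}$ for every $g\in G_L$. This assignment is clearly nonnegative, so feasibility reduces to checking the coverage constraints. For a point $p\in N_L$, the relevant constraint of \eqref{LP:left} asks that $\sum_{g\in S_L(p)\cap G_L} y_g \geq 1$. By the very definition of $N_L$ we have $\sum_{g\in S_L(p)\cap G_L} x^*_{g,L} \geq \tfrac12$, and therefore $\sum_{g\in S_L(p)\cap G_L} y_g = 2\sum_{g\in S_L(p)\cap G_L} x^*_{g,L} \geq 1$. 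Thus $y$ is feasible for the left-guarding relaxation on $(N_L,G_L)$.

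Finally, the cost of $y$ is $\sum_{g\in G_L} w_g\, y_g = 2\sum_{g\in G_L} w_g\, x^*_{g,L}$, and combining this with the optimality/integrality argument above yields $w(A_L^*)\leq 2\sum_{g\in G_L} w_g\, x^*_{g,L}$, as claimed. I do not expect a genuine obstacle here: the only subtle point is the appeal to integrality, namely that because the left-visibility matrix is totally balanced the integral optimum coincides with the fractional optimum, so bounding $w(A_L^*)$ by the cost of a constructed feasible fractional point is legitimate. Everything else is a one-line scaling argument driven directly by the threshold $\tfrac12$ defining $N_L$, and the $A_R^*$ bound is verbatim the same with $S_R$, $G_R$, $N_R$, and $x^*_{g,R}$ in place of their left-hand counterparts.
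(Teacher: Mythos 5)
Your proposal is correct and follows essentially the same route as the paper's proof: scale the left-components of $x^*$ by the factor $2$, observe feasibility for \eqref{LP:left} directly from the threshold $\tfrac12$ defining $N_L$, and then invoke Lemma~\ref{lem:split} (total balancedness, hence integrality of the left-guarding polyhedron) to conclude that $w(A_L^*)$ equals the fractional optimum and is therefore at most the cost of the scaled solution. The right-guard bound is handled symmetrically in both arguments, so there is nothing to add.
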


\begin{proof}
  We only prove the first inequality as the second is symmetrical.
  Setting $x_{g,L} = 2 x^*_g$ we get a fractional solution for \eqref{LP:left}
  for guarding $N_L$. The solution $x$ is feasible, by definition of $N_L$,
  and its cost is $2 \sum_{g \in G_L} w_g\, x^*_g$. Therefore, the optimal
  fractional solution can only be smaller than that. Lemma~\ref{lem:split} tells
  us that the cost of an optimal fractional solution is the same as the cost
  of an optimal integral solution, namely, $w(A^*_L)$. \qed
\end{proof}

Since $\sum_{g \in G_L} w_g\, x^*_{g,L} + \sum_{g \in G_R} w_g \, x^*_{g,R}$
is a lower bound on the cost of the optimal solution for guarding $N$, it follows that
the cost of $(A_L^* , A_R^*)$ is at most twice the optimum. To see that this
is feasible, consider some point $p \in N$. Because of
\eqref{LP2:coverage-constraint} and our assumption that each point is seen by
some guard on its left or on its right, it must be the case that $p \in N_L$
or $p \in N_R$. Therefore $p$ must be covered, either from the left by $A_L^*$
or from the right by $A_R^*$.

To compute $A^*_L$ and $A^*_R$ we can take the fractional solution to
\eqref{LP:left} and turn it into a basic, and therefore integral, solution
without increasing its cost. Alternatively, we can run Kolen's
algorithm~\cite{thesis/Kolen82} for matrices in greedy standard form. This
finishes the proof of Theorem~\ref{thm:one-sided}.

\subsection{Partial covering}

In this section we focus on the partial version of the one-sided guarding problem.

\begin{theorem} \label{thm:partial}
  There is a $(2 + \epsilon)$-approximation and a quasi-polynomial time
  $2$-approximation for partial discrete one-sided guarding.
\end{theorem}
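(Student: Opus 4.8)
The plan is to reduce partial one-sided guarding to the all-points (full-coverage) one-sided guarding problem that we already solved with Theorem~\ref{thm:one-sided}, and pay for the partiality through the LP relaxation. I would start by writing the natural covering LP for the partial problem: keep the guard variables $x_{g,L}, x_{g,R}$ as in \eqref{LP:one-sided}, but introduce for each point $p \in N$ a slack variable $z_p \in [0,1]$ indicating that $p$ is allowed to go unguarded, replacing the right-hand side of \eqref{LP2:coverage-constraint} by $1 - z_p$, and adding the budget constraint $\sum_{p \in N} p(p)\, z_p \le b$. We solve this LP to obtain an optimal fractional solution $(x^*, z^*)$; its value is a lower bound on the optimal partial-guarding cost.

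The key idea is a threshold/filtering step. Fix a parameter $\alpha \in (0,1)$ and declare a point $p$ to be \emph{required} if $z^*_p \le \alpha$, and \emph{droppable} otherwise. Let $N' = \{p \in N : z^*_p \le \alpha\}$. For the droppable points, their total profit that we forgo is under control: since each has $z^*_p > \alpha$, we have $\sum_{p \notin N'} p(p) < \frac{1}{\alpha}\sum_{p \notin N'} p(p)\, z^*_p \le \frac{b}{\alpha}$, so scaling $b$ appropriately keeps the dropped profit within budget. On the required set $N'$, the restricted fractional solution satisfies $\sum_{g \in S_L(p)\cap G_L} x^*_{g,L} + \sum_{g \in S_R(p)\cap G_R} x^*_{g,R} \ge 1 - \alpha$ for every $p \in N'$, so $\frac{1}{1-\alpha}\,x^*$ is feasible for the \emph{full} one-sided guarding LP \eqref{LP:one-sided} on the instance $(N', G_L, G_R)$. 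Running the $2$-approximation of Theorem~\ref{thm:one-sided} on $(N', G_L, G_R)$ then costs at most $\frac{2}{1-\alpha}$ times the LP optimum, and by choosing $\alpha$ small (e.g.\ $\alpha = \Theta(\epsilon)$) we obtain the claimed $(2+\epsilon)$ ratio, at the cost of guessing or enumerating the correct budget split between required and dropped points.

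The subtlety — and the main obstacle — is that filtering by a single threshold $\alpha$ does not by itself respect the budget exactly: we need the dropped profit to be \emph{at most} $b$, not merely $O(b/\alpha)$. The clean way around this is to guess the set of points that the optimal \emph{integral} solution leaves unguarded, or rather to guess its total profit, and then enforce full coverage of the rest. For the quasi-polynomial $2$-approximation, I would exploit the geometric structure: by Lemma~\ref{lem:Mitchell} the optimal unguarded set can be shown to have bounded ``combinatorial complexity,'' so that the relevant choices of which points to drop range over only a quasi-polynomial number of candidate profiles, which we enumerate; for each candidate we run the exact integral one-sided algorithm and keep the cheapest feasible outcome, yielding a true factor $2$. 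For the polynomial $(2+\epsilon)$ version we avoid enumeration by accepting the $\frac{1}{1-\alpha}$ loss in the coverage threshold, discretizing the budget into $O(1/\epsilon)$ levels so that the dropped profit overshoots $b$ by at most a negligible amount that can be absorbed. Thus the hard part is the bookkeeping that simultaneously bounds the guarding cost by the LP (giving the factor $2$) and the forgone profit by the budget (giving feasibility); everything else is a direct invocation of Theorem~\ref{thm:one-sided} and Lemma~\ref{lem:split}.
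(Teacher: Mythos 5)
There is a genuine gap here: your core device --- rounding the natural partial-covering LP (guard variables plus slacks $z_p$ and a budget constraint) after filtering at a threshold $\alpha$ --- cannot be made to work, and the repairs you sketch do not close the hole you yourself identify. Feasibility of the partial problem requires the unguarded profit to be at most $b$ exactly; an overshoot cannot be ``absorbed'' or made ``negligible'' by discretizing the budget into $O(1/\epsilon)$ levels, because the budget is a hard constraint, not an objective term. Worse, the LP you start from has unbounded integrality gap, so \emph{no} rounding scheme can simultaneously respect the budget and stay within a constant factor of its optimum: take a terrain with a single guard $g\in G_L$ of weight $w_g=1$ that sees $n$ points, each of profit $1$, and set $b=n-1$. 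The LP takes $x^*_{g,L}=1/n$ and $z^*_p=1-1/n$ for every $p$, paying $1/n$, while any feasible integral solution must pick $g$ and pay $1$. In this instance your filtering step (for any fixed $\alpha<1-1/n$) declares every point droppable, producing the empty, infeasible solution; forcing feasibility costs $n$ times the LP value. Your two fallback ideas do not rescue the argument: guessing the \emph{total profit} of the optimally-unguarded set does not tell you \emph{which} points to drop (that selection is the entire difficulty of partial covering), and the claim that Lemma~\ref{lem:Mitchell} bounds the candidate unguarded sets by a quasi-polynomial family is asserted without proof --- the unguarded set is the complement of a union of visibility regions of arbitrarily many guards, and no quasi-polynomial enumeration of such sets is exhibited.

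The paper's proof avoids the slack LP entirely. Its only new content is a structural lemma: the constraint matrix of \eqref{LP:one-sided} is \emph{2-separable}, i.e., it decomposes as $[C_1\ 0]+[0\ C_2]$ into the left-guard and right-guard blocks, and any row-wise mix of the two parts is totally balanced, which follows from Lemma~\ref{lem:split} applied to each diagonal block. Theorem~\ref{thm:partial} is then immediate from Proposition~\ref{prop:partial}, Mestre's black-box result for 2-separable matrices. That framework handles the budget by Lagrangian relaxation --- the budget constraint is lifted into the objective, the resulting full-coverage-type subproblems are solved exactly (which is where total balancedness is used), and two Lagrangian solutions are merged --- and this is precisely the machinery that sidesteps the integrality gap of the natural partial LP. To fix your write-up, discard the filtering argument and replace it with the 2-separability lemma plus the invocation of Proposition~\ref{prop:partial}.
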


Our approach is based on the framework of Mestre
\cite{conf/stacs/Mestre08}. We say $A$ is a \emph{one-sided-visibility} matrix
if it is the element-set incidence matrix of the covering problem defined by
\eqref{LP:one-sided} for some instance of the one-sided guarding problem.
Also, $A$ is said to be \emph{2-separable} if there exist binary matrices $A_1$
and $A_2$ such that $A = A_1 + A_2$ and every matrix $B$ formed by taking rows
from $A_1$ or $A_2$ is totally balanced (the $i$th row of $B$ is the $i$th row
of $A_1$ or the $i$th row of $A_2$, for all $i$).

\begin{proposition}[\cite{conf/stacs/Mestre08}] \label{prop:partial}
  Let $A$ be a 2-separable matrix. Then there is a $(2 + \epsilon)$-approximation
  and a quasi-polynomial time $2$-approximation for the partial
  problem defined by $A$.
\end{proposition}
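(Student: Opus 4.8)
The plan is to reduce the budgeted (partial) problem to its \emph{prize-collecting} relaxation by Lagrangian relaxation of the budget, to supply an \emph{integral} factor-$2$ approximation for the prize-collecting version, and then to invoke the reduction of Mestre~\cite{conf/stacs/Mestre08} that turns such an approximation into a $(2+\epsilon)$-approximation in polynomial time and a $2$-approximation in quasi-polynomial time for the partial problem. In the prize-collecting version each row $i$ carries a penalty $\pi_i = \lambda\, p_i$ and we may either cover it or pay $\pi_i$; the reduction then searches over the multiplier $\lambda$ to match the budget $b$, combining in the quasi-polynomial case the two solutions that bracket $b$.

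The crucial observation is that the prize-collecting problem for $A$ is exactly the \emph{full} covering problem for the augmented matrix $[\,A \mid I\,]$: the identity column associated with row $i$ has cost $\pi_i$, so selecting it plays the role of paying the penalty, while the constraint $\sum_j A_{ij} x_j + z_i \ge 1$ becomes an ordinary covering constraint over $[\,A \mid I\,]$. I claim that $[\,A \mid I\,]$ is again $2$-separable. Writing $A = A_1 + A_2$ we have $[\,A \mid I\,] = [\,A_1 \mid I\,] + [\,A_2 \mid 0\,]$, and any matrix $B$ obtained by mixing rows of these two summands has its $A$-part a row-mixing of $A_1$ and $A_2$ — totally balanced by hypothesis — together with extra columns each containing at most a single $1$. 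Such columns cannot belong to a square submatrix all of whose row and column sums equal $2$, so they cannot create a forbidden submatrix; hence $B$ inherits total balancedness and $[\,A \mid I\,]$ is $2$-separable.

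Given this, I would apply to $[\,A \mid I\,]$ exactly the LP-rounding of Theorem~\ref{thm:one-sided}: solve the covering LP, assign each row to whichever of $[\,A_1 \mid I\,]$, $[\,A_2 \mid 0\,]$ supplies at least half of its (unit) coverage, and solve the two resulting sub-instances optimally. By the previous paragraph each sub-instance is totally balanced, so its polyhedron is integral and Kolen's algorithm~\cite{thesis/Kolen82} returns an optimal integral cover; scaling the fractional solution by $2$ certifies feasibility and bounds each sub-instance cost by twice the corresponding LP cost, as in Lemma~\ref{lem:AL}. This yields an \emph{integral} factor-$2$ approximation for the prize-collecting problem whose cost is charged term-by-term against the LP optimum, which is the Lagrangian-multiplier-preserving property the reduction requires.

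The main obstacle is the final step: verifying that this rounding meets the precise multiplier-preserving hypothesis of Mestre's reduction, and that the search over $\lambda$ (together with the bracketing-and-combining argument) provably converts the penalty approximation into a partial-covering approximation losing only a $(1+\epsilon)$ factor in polynomial time and nothing in quasi-polynomial time. The structural ingredients — the $[\,A \mid I\,]$ reformulation, its $2$-separability, and the factor-$2$ integral rounding — are routine adaptations of the full-covering analysis of Sections~\ref{sec:left-guarding} and~\ref{sec:one-sided-guarding}; the content to be checked lies entirely in matching the hypotheses of~\cite{conf/stacs/Mestre08}.
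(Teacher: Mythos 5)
First, a point of calibration: the paper does not prove this proposition at all --- it is imported as a black box from \cite{conf/stacs/Mestre08}, so your sketch can only be compared against Mestre's own argument. The purely structural parts of your sketch are sound and are indeed ingredients of how that framework gets applied: the prize-collecting version of the problem for $A$ is the full covering problem for the augmented matrix $[A \mid I]$; a column with at most one nonzero entry can never appear in a square submatrix all of whose row and column sums equal $2$, so $[A\mid I]=[A_1\mid I]+[A_2\mid 0]$ is again 2-separable; and the rounding of Theorem~\ref{thm:one-sided} then yields an integral factor-$2$ approximation for that prize-collecting problem.

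The genuine gap is exactly where you locate ``the main obstacle,'' and it makes the attempt circular. The reduction you invoke in the final step --- ``an LMP $\alpha$-approximation for the prize-collecting version converts into an $(\alpha+\epsilon)$-approximation in polynomial time and an $\alpha$-approximation in quasi-polynomial time for partial cover'' --- is not available as a black box: \cite{conf/stacs/Mestre08} itself proves that Lagrangian relaxation used with an LMP $\alpha$-approximate oracle as a black box cannot beat the factor $\frac{4}{3}\alpha$ (matching the upper bound of K\"onemann, Parekh, and Segev), so your plan as stated could only certify an $(\frac{8}{3}+\epsilon)$-approximation here. The way Mestre avoids the $\frac{4}{3}$ loss --- merging the two solutions obtained for multipliers bracketing the budget, and the recursive enumeration behind the quasi-polynomial bound --- requires using the totally balanced decomposition a second time, inside the Lagrangian framework; that is the real content of the proposition, and it is precisely the part your sketch defers back to the citation being proved. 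A secondary flaw: the LMP property you attribute to the LP rounding is asserted, not established. Writing $\pi(S)$ for the total penalty of rows left uncovered by $S$, the rounding gives $c(S)+\pi(S)\le 2\,\mathrm{OPT}$, whereas the LMP property demands the stronger $c(S)+2\,\pi(S)\le 2\,\mathrm{OPT}$, with penalties charged at the full factor; nothing in your argument delivers that inequality, so even the hypothesis you plan to hand to the (unavailable) black box is unverified.
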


Therefore, all we need to do to prove Theorem~\ref{thm:partial} is to argue
that every one-sided visibility matrix is 2-separable.

\begin{lemma}
  Any one-sided visibility matrix is 2-separable.
\end{lemma}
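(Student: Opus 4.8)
The plan is to exhibit an explicit decomposition $A = A_1 + A_2$ of the one-sided visibility matrix and then verify the total-balancedness condition by reducing it to Lemma~\ref{lem:split}. A one-sided visibility matrix has one row per point $p \in N$ and one column per guard-direction pair: columns indexed by $g \in G_L$ record left-visibility, columns indexed by $g \in G_R$ record right-visibility. The natural choice is to let $A_1$ collect the left-visibility entries (the $S_L(p) \cap G_L$ part of each row) and $A_2$ collect the right-visibility entries (the $S_R(p) \cap G_R$ part). Concretely, $A_1$ agrees with $A$ on the $G_L$ columns and is zero on the $G_R$ columns, while $A_2$ is zero on the $G_L$ columns and agrees with $A$ on the $G_R$ columns. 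Since every entry of $A$ lives in exactly one of the two column blocks, we trivially have $A = A_1 + A_2$ and both are binary.

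Next I would verify the $2$-separability condition: every matrix $B$ obtained by choosing, independently for each row $i$, either row $i$ of $A_1$ or row $i$ of $A_2$ must be totally balanced. Fix such a choice. For a row $i$ where we pick the $A_1$ row, $B$ carries the left-visibility pattern of point $p_i$ and is identically zero on the $G_R$ columns; for a row where we pick the $A_2$ row, $B$ is zero on the $G_L$ columns and carries the right-visibility pattern. Thus $B$ splits, up to reordering rows, into a block on the $G_L$ columns whose nonzero rows form a left-visibility matrix (for the subset of points assigned to $A_1$) and a block on the $G_R$ columns whose nonzero rows form a right-visibility matrix (for the points assigned to $A_2$), with all cross entries zero.

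The remaining work is to argue that such a block-structured matrix is totally balanced. Lemma~\ref{lem:split} already tells us each block is totally balanced on its own; by symmetry the right-visibility block is too. It then suffices to observe that total balancedness is preserved under the operations that glue these blocks together: adding all-zero columns, adding all-zero rows, and placing two totally balanced matrices in a block-diagonal layout. I would verify this using the forbidden-submatrix characterization. After permuting rows and columns so each of the two blocks is in standard greedy form and the zero rows and columns are pushed to the appropriate corners, any induced $2 \times 2$ submatrix of the form \eqref{eq:forbidden} would need its two $1$'s in a common row and its other $1$ in a common column, forcing all four of its entries to lie within a single block (an all-zero row or column cannot supply the needed $1$'s). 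But within a single block no such submatrix exists, since the block is already in standard greedy form. Hence the glued matrix $B$ contains no forbidden pattern and is totally balanced.

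The main obstacle is the last step: confirming that the block-diagonal combination of two totally balanced matrices, together with padding zero rows and columns, remains totally balanced. This is a structural closure property rather than a deep geometric fact, so the geometric content is entirely supplied by Lemma~\ref{lem:split}; the care needed is purely combinatorial, in checking that a simultaneous standard-greedy ordering of the two blocks exists and that no forbidden $2\times 2$ pattern can straddle the two blocks. I expect this to go through cleanly because the two blocks occupy disjoint column sets, so every column of $B$ is nonzero in at most one block, which prevents any forbidden pattern from spanning both.
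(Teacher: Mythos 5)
Your proof is correct and follows essentially the same route as the paper: the identical decomposition $A_1 = [C_1\ 0]$, $A_2 = [0\ C_2]$, the same block-diagonal reorganization of $B$, and the same appeal to Lemma~\ref{lem:split} (with symmetry for the right block). The only difference is that you explicitly verify, via the forbidden-submatrix characterization \eqref{eq:forbidden}, the closure fact that a block-diagonal assembly of totally balanced matrices (plus zero rows/columns) is totally balanced --- a step the paper simply asserts with ``it follows that $B'$ must be totally balanced as well.''
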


\begin{proof}
  Let $A$ be a one-sided visibility matrix and assume, without loss of
  generality, that $A$ has the form $[C_1\ C_2]$ where the columns of $C_1$
  correspond to left guards $G_L$ and the columns of $C_2$ correspond to the right
  guards $G_R$.

  Our decomposition of $A$ uses $A_1 = [C_1\ 0]$ and $A_2 = [0\ C_2]$. Suppose
  that a matrix $B$ is formed by taking rows from $A_1$ and $A_2$. Let $N_L$
  be the set of rows originating from $A_1$ and $N_R$ the set of rows
  originating from $N_R$ (note that $N_L$ and $N_R$ constitute a partition
  of $N$). Permute the rows of $B$ so that rows in $N_L$ appear before rows
  in $N_R$. This gives rise to the following block matrix
  \[ B' = \left[ \begin{array}{cc} D_1 & 0 \\ 0 & D_2 \end{array} \right] \]
  where the rows of $D_1$ correspond to points in $N_L$ and its columns to
  left guards, and the rows of $D_2$ correspond to points in $N_R$ and its
  columns to right guards. By Lemma~\ref{lem:split} both $D_1$ and $D_2$ are
  totally balanced. It follows that $B'$ must be totally balanced as well. \qed
\end{proof}

This finishes the proof of Theorem~\ref{thm:partial}.

\section{Applications}
\label{sec:applications}

In this section we show how to use the 2-approximations for one-sided guarding
to design good approximation algorithms for more general variants.

\subsection{The continuous case}

We assume that the weights are uniform\footnote{This assumption can be removed
  using standard discretization techniques at the expense of a small increase
  in the approximation factor}.  Recall that in this variant guards can be
placed anywhere on the terrain and we are to guard all the points. Our
reduction to the discrete case follows the approach of Ben-Moshe et
al.~\cite{journals/siamcomp/Ben-MosheKM07}.

\begin{theorem} \label{thm:continuous}
  There is a $4$-approximation algorithm for the continuous case and and a $(4 +
  \epsilon)$-approximation for its partial version.
\end{theorem}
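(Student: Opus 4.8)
The plan is to reduce the continuous two-sided problem to a \emph{discrete one-sided} instance while losing only a factor of $2$, and then invoke Theorem~\ref{thm:one-sided} (for the basic problem) or Theorem~\ref{thm:partial} (for the partial version) to pay a second factor of $2$, for an overall ratio of $4$ (resp.\ $4+\epsilon$). Since the weights are uniform, cost is simply cardinality. The crux is a \emph{guard-splitting} observation, itself an application of Lemma~\ref{lem:Mitchell}: if a guard $g$ lies on the edge with endpoints $u<v$, then every point $p<g$ with $g\sees p$ also satisfies $v\sees p$, and every point $q>g$ with $g\sees q$ also satisfies $u\sees q$. Indeed, for $p<u<g<v$ one applies Lemma~\ref{lem:Mitchell} with $a=p,\,b=u,\,c=g,\,d=v$, using $p\sees g$ and $u\sees v$, to obtain $p\sees v$; the mirror choice handles $q$, and the remaining cases (points lying on the edge itself, or $g$ already a vertex) are immediate because an edge is a straight segment.

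First I would set up the discrete one-sided instance with $G_L=G_R=V$, the terrain vertices, over a finite witness set $N$ (described below). Guard-splitting shows that the one-sided $\OPT$ of this instance is at most $2\,\OPT$ of the continuous problem: given a continuous solution $C$, replace each $g\in C$ lying on edge $(u,v)$ by the left-guard $u\in A_L$ (which reproduces what $g$ sees to its right) and the right-guard $v\in A_R$ (for what $g$ sees to its left); this at most doubles the count and, by splitting, still covers every point $C$ covered, so in particular all of $N$. Conversely, any one-sided solution $(A_L,A_R)$ yields the continuous vertex set $A=A_L\cup A_R$ of no greater cost, since a vertex guard sees both directions and hence covers at least the points assigned to it one-sidedly. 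Applying the $2$-approximation of Theorem~\ref{thm:one-sided} therefore returns a vertex set of cost at most $2\cdot 2\,\OPT=4\,\OPT$.

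It remains to choose $N$ so that guarding $N$ by vertices is \emph{equivalent} to guarding all of $T$; here I follow Ben-Moshe et al.~\cite{journals/siamcomp/Ben-MosheKM07}. For each vertex $w$ and each edge $e$, the portion of $e$ that $w$ sees is a subinterval, so the resulting visibility breakpoints cut each edge into finitely many \emph{atomic} subintervals on which the set $\{w\in V : w\sees\cdot\,\}$ of seeing vertices is constant. Let $N$ consist of the vertices together with one interior point of each atomic subinterval. Because the seeing set is constant on an atomic subinterval, a vertex guard sees its representative iff it sees the whole subinterval; hence a vertex set guards $N$ iff it guards $T$. This both certifies that the continuous solution $A$ recovered above genuinely guards all of $T$, and that $N$ has polynomial size.

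For the partial version the identical reduction applies, using Theorem~\ref{thm:partial} in place of Theorem~\ref{thm:one-sided}; this replaces the inner factor $2$ by $2+\epsilon$, giving $2(2+\epsilon)=4+2\epsilon$, which I rescale to $4+\epsilon$. The budgets match exactly once I assign each representative in $N$ a profit equal to the length of its atomic subinterval: a subinterval is left unguarded precisely when its representative is, so the total length left unguarded equals the total profit of the unguarded representatives. The step I expect to be most delicate is this point discretization---verifying that the atomic decomposition is finite and that guarding the representatives transfers to the full subintervals (including breakpoints and vertices), and, in the partial case, that the length and profit budgets correspond exactly. A pleasant byproduct is that feasibility is automatic: every $p\in N$ is seen strictly from one side by an edge-neighboring vertex, so we never encounter the degenerate self-guarding situation that would cost an extra fifth factor.
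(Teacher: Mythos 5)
Your overall architecture matches the paper's (split each continuous guard into a left/right pair at the endpoints of its edge, discretize $T$ into $O(n^2)$ segments on which vertex visibility is all-or-nothing, then invoke Theorem~\ref{thm:one-sided} resp.\ Theorem~\ref{thm:partial}), but there is a genuine flaw in your discretization: you put the terrain vertices into the point set $N$ of the one-sided instance. Your splitting lemma only covers points \emph{strictly} to one side of the replaced guard, so when a guard $g$ in the continuous optimum is itself a vertex, the pair (left-guard at $g$, right-guard at $g$) sees everything $g$ sees \emph{except $g$ itself}. Consequently your claim that the split solution ``still covers every point $C$ covered, so in particular all of $N$'' is false, and the inequality $\OPT_{\text{one-sided}}\le 2\,\OPT_{\text{cont}}$ that carries the whole factor-$4$ bound does not follow. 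It genuinely fails: take the single-peak terrain with vertices $(0,0),(1,1),(2,0)$. The continuous optimum is $1$ (a guard at the peak), but with $V\subseteq N$ any one-sided solution needs a right-guard at the peak (the only vertex seeing $(0,0)$ from the right), a left-guard at the peak (for $(2,0)$), and then a \emph{third} guard to cover the peak itself, since neither one-sided guard at the peak covers its own location. So the one-sided optimum is $3>2\cdot 1$, and your reduction only yields a $6$-approximation in general, not $4$. Your closing remark that ``feasibility is automatic'' addresses feasibility of the \emph{instance} (every point of $N$ can be seen from one side), which is true but beside the point; what breaks is feasibility of the doubled split of the continuous optimum.

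The paper avoids exactly this trap: its point set is $N=M$, containing only one representative in the \emph{strict interior} of each essential segment and \emph{no vertices}, so the split solution $A'$ (which covers $T\setminus V\supseteq M$) is feasible for the discrete instance and $|A''|\le|A'|=2|A^*|$ holds. Coverage of the vertices by the final solution then comes for free in the other direction: the returned guards are two-sided vertex guards, and a guard seeing the interior representative of an essential segment sees the entire closed segment, whose endpoints include the vertices; hence guarding $M$ already guards all of $T$. (This closure step is the ``delicate point'' you flagged; it is needed in the paper's proof too, but only in the direction that is harmless.) Your proof can be repaired by deleting $V$ from $N$ and adding this argument --- which is precisely the paper's proof. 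One further small remark: your assertion that each vertex sees a connected subinterval of each edge is true but unproven; the paper instead defines breakpoints directly via lines through mutually visible vertex pairs, which gives the all-or-nothing property with the same amount of work.
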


Let $A^*$ be an optimum set of guards for a given instance $T$ of the
continuous problem. Consider a guard $g$ in $A^*$. If $g$ is not a vertex
of $T$ then it must lie on a segment $\overline{pq}$ of $T$. Suppose without loss of generality that $p<q$,
then a left guard at $p$ and a right guard at $q$
can see at least as much as $g$ does. If $g$ is a vertex of $T$ then a left
guard and a right guard at $g$ together can see the same as $g$ does minus $g$
itself. Therefore there exists a solution $A'$ that uses only left and right
guards on the vertices of $T$ that covers $T \setminus V$ such
that $|A'| = 2 |A^*|$.

\begin{figure}[t]
  \centering
  \includegraphics[width=.6\columnwidth]{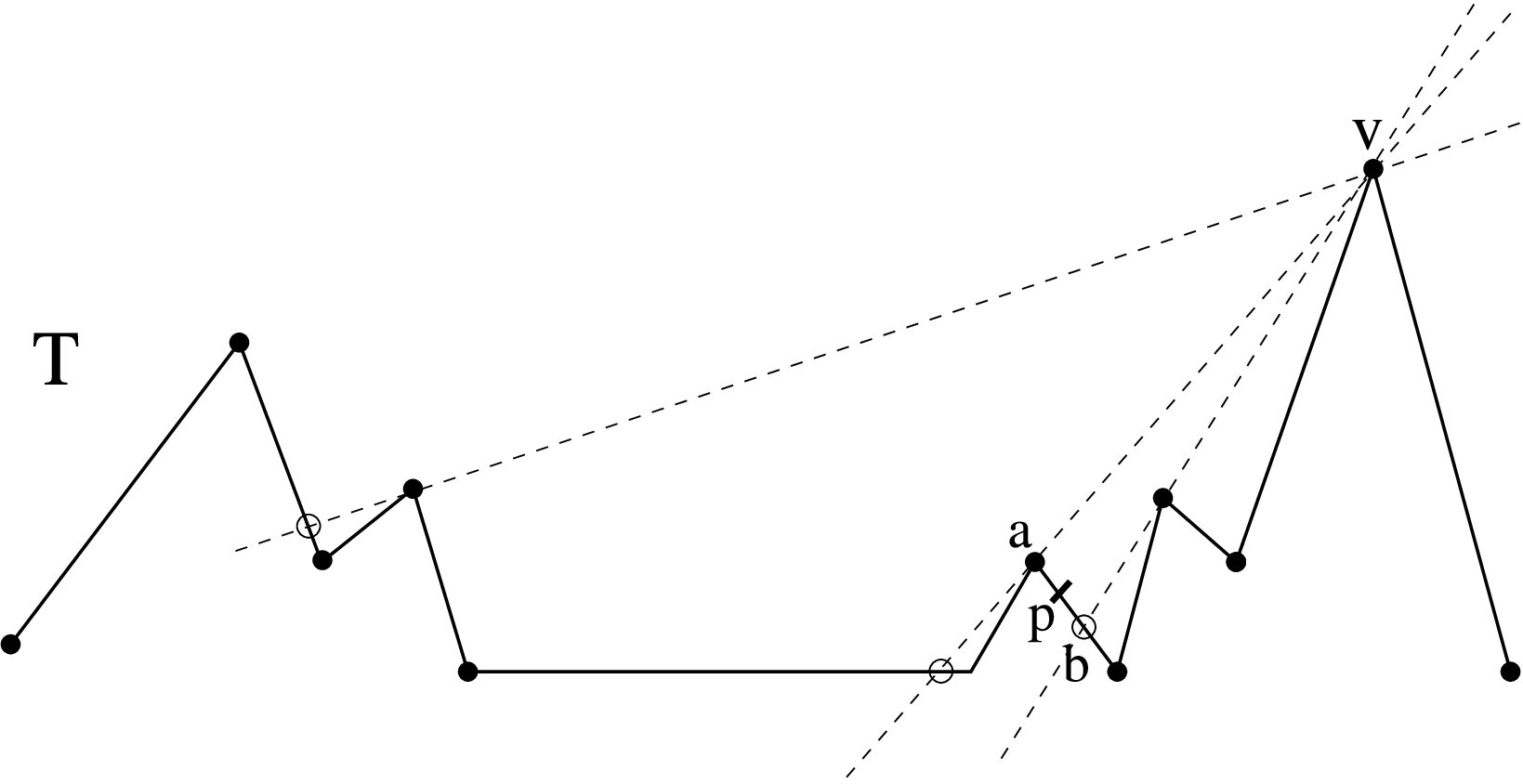}
  \caption{Example of additional set of points/guards for a vertex $v$ of $T$.
  Point $p$ is the point selected from the essential segment $\overline{ab}$.}
  \label{fig:additionalPoints}
\end{figure}

To deal with the fact that every point must be guarded, consider the line
through each pair of vertices $v_1,v_2 \in V$ such that $v_1\see v_2$ and
introduce at most two new points that see $v_1$ and $v_2$ at the place where
the line intersects the terrain. These points partition $T$ into $O(n^2)$
\emph{essential segments}. In the strict interior of each segment introduce an
additional point $p$ that is \emph{responsible} for the segment. Let $M$ be
the set of all such points. (See Figure~\ref{fig:additionalPoints} for an illustration.) The key realization is that for every guard $g \in V$ and essential
segment $(a,b)$, either $g$ can see the whole segment or nothing of it.

Hence, a feasible solution to the one-sided discrete version with $G_L=G_R = V$
and $N= M$ also constitutes a feasible solution to the continuous case. Let $A''$ be an optimal solution for this discrete problem, and $A'''$ be
the solution returned by Theorem~\ref{thm:one-sided}. Since $A'$ is feasible for the discrete instance, we get
$|A'''|\leq 2|A''|\leq 2|A'|= 4|A^*|$ and we get
an overall approximation factor of 4.

For the partial version where we want at most a fraction of the length to be
left unguarded we give to each point in $p \in M$ a profit equal to the length
of the essential segment it is responsible for.

\subsection{The discrete case}

We consider the discrete version where we are given a set of guards $G$ and
set of points $N$ to guard. In this case, guards can see in both directions.

\begin{theorem} \label{thm:discrete}
  There is a $4$-approximation for the weighted discrete case and $(4 +
  \epsilon)$-approximation for its partial version when $G \cap N =
  \emptyset$. Otherwise, we get $5$ and $(5 + \epsilon)$-approximations respectively.
\end{theorem}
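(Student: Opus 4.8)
The plan is to reduce the two-sided discrete problem to the one-sided problem of Theorem~\ref{thm:one-sided}, paying an extra factor of $2$ because a single two-sided guard, which can look in both directions, must be simulated by one left guard and one right guard at the same location. Concretely, I would build a one-sided instance with $G_L = G_R = G$ and the same point set $N$, carry the weights over unchanged, and then invoke the $2$-approximation of Theorem~\ref{thm:one-sided} (or the $(2+\epsilon)$-approximation of Theorem~\ref{thm:partial} for the partial version). The final two-sided solution is obtained by forgetting the left/right labels and taking the union $A_L \cup A_R$ of the returned guards; this only decreases the weight and preserves feasibility, since a guard that sees $p$ from one side certainly sees $p$ as a two-sided guard.

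First consider the clean case $G \cap N = \emptyset$. Here every guard $g$ with $g \sees p$ satisfies $g \ne p$, hence $g < p$ or $g > p$, so $S(p) = S_L(p) \cup S_R(p)$ and the one-sided instance is feasible whenever the original is. To bound the one-sided optimum, take an optimal two-sided solution $A^*$ and place at every $g \in A^*$ both a left and a right guard; the resulting one-sided solution covers every point that $A^*$ covers and costs exactly $2\,w(A^*)$. Thus the one-sided optimum is at most $2\,\OPT$, and Theorem~\ref{thm:one-sided} returns a solution of weight at most $4\,\OPT$. For the partial version the same doubling maps a feasible partial two-sided solution to a feasible partial one-sided solution with the identical set of unguarded points (hence the same violated profit and the same budget) at twice the weight, so Theorem~\ref{thm:partial} yields a $(4+\epsilon)$-approximation.

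The case $G \cap N \ne \emptyset$ is where the difficulty lies, and it is the source of the extra unit in the factor $5$. The obstruction is a point $p \in N \cap G$ that sees itself: in the one-sided model the guard at $p$'s own location covers $p$ neither strictly from the left nor strictly from the right, so an optimal two-sided solution that relies on self-coverage need not double into a one-sided solution of twice the weight. My plan is to peel off self-coverage in a preprocessing step and charge it additively to $\OPT$. Points $p$ with $S(p) = \{p\}$ can only ever be guarded by themselves, so the guard at each such $p$ lies in every feasible solution, in particular in the optimal one; I would force all of them into the solution at total weight at most $\OPT$, delete them together with everything they cover, and observe that the residual instance is one-sided coverable. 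Applying the factor-$4$ reduction above to the residual instance and adding back the forced self-guards then gives weight at most $4\,\OPT + \OPT = 5\,\OPT$, and $(5+\epsilon)\,\OPT$ for the partial version via Theorem~\ref{thm:partial}.

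The step I expect to be the main obstacle is making this additive bound rigorous for the \emph{weighted} problem: it must be shown that, after forcing the purely self-seeable points, the optimum of the residual one-sided instance is still at most $2\,\OPT$, i.e.\ that replacing each remaining use of self-coverage in $A^*$ by a strictly one-sided guard does not blow up the weight. I would attempt to handle this by arguing that self-coverage contributes a diagonal, and hence totally balanced, block to the constraint matrix, so that the self-covered points can be folded into the left- or right-guarding subproblem of Lemma~\ref{lem:split} without destroying total balance; the careful bookkeeping needed to confirm that this costs only one extra copy of the fractional optimum, rather than two, is the delicate part of the argument.
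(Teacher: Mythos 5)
Your handling of the case $G \cap N = \emptyset$ coincides with the paper's proof: each two-sided guard is split into a left copy and a right copy, the one-sided optimum is at most twice the original one, and Theorem~\ref{thm:one-sided} (resp.\ Theorem~\ref{thm:partial}) gives the factor $4$ (resp.\ $4+\epsilon$). That half is fine.

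The case $G \cap N \neq \emptyset$ is where your argument has a genuine gap, and it is not merely the bookkeeping you flagged. Your preprocessing forces only the points with $S(p)=\{p\}$, and you then claim the residual one-sided instance has optimum at most $2\,\mathrm{OPT}$. That claim is false, with an unbounded gap. Consider a straight-line terrain, so every point sees every other point, with $G=\{p,h\}$, $w_p=1$, $w_h=W$ for huge $W$, and $p \in N$. The two-sided optimum is $\{p\}$, of weight $1$, since $p$ guards itself and everything else. Your rule does not force $p$ (because $S(p)=\{p,h\}\neq\{p\}$), so in the residual one-sided instance $p$ must be seen strictly from one side, which only $h$ can do; every feasible one-sided solution, and hence the output of your algorithm, has weight at least $W$. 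The doubling argument collapses exactly here: left and right copies of the guard at $p$ see everything $p$ sees except $p$ itself, and no combinatorial rule of the form ``force $p$ when $S(p)=\{p\}$'' can decide which points ought to guard themselves --- that decision is itself part of the optimization.

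The paper makes precisely this decision fractionally rather than combinatorially. It solves the two-sided relaxation \eqref{LP:discrete}, sets $A_0=\setGilt{p \in N \cap G}{x_p^* \geq \frac15}$ at cost at most $5\sum_{g\in A_0}w_g x_g^*$, and notes that every point not seen by $A_0$ satisfies $\sum_{g\in S(p)\setminus\{p\}} x^*_g \geq \frac45$; scaling the remaining variables by $\frac54$ therefore gives a feasible fractional solution of \eqref{LP:one-sided}, whose rounding via Theorem~\ref{thm:one-sided} costs at most $5\sum_{g\notin A_0}w_g x^*_g$, for a total of $5$ times the LP value. (In the instance above the LP sets $x_p^*=1$, so $p$ lands in $A_0$ and the algorithm pays $1$.) Your proposed repair --- that left-visibility plus a diagonal self-coverage block remains totally balanced --- is a different and substantially stronger claim that you do not prove: the ordering used in Lemma~\ref{lem:split} breaks once self-loops are present, since with points ordered left to right and guards right to left the forbidden pattern \eqref{eq:forbidden} is realizable (take $g<p_1<p_2$ with $g \sees p_1$, $p_1 \sees p_2$, $g \not\sees p_2$, and let $p_1$ serve as its own guard). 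Even if that stronger claim were true, it would still have to be integrated into a rounding scheme to yield the stated guarantee; as written, the second half of the theorem is unproven.
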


The case where $G \cap N = \emptyset$ is easily handled by replacing a guard
who can see in both directions with two guards that can see in one
direction. Thus we pay a factor 2 to reduce the general problem to one-sided
guarding. This also holds for the partial version.

Notice that if $G \cap N \neq \emptyset$ then the reduction above must pay a
factor of 3 since a point guarding itself must be guarded by some other point
strictly from the left or the right, and thus it only leads to a 6
approximation. To get the ratio of 5 we need to use yet another linear
program.

\begin{gather} \label{LP:discrete} \tag{LP3}
  \text{minimize} \ \sum_{g\in G} w_g\, x_g \\[-6ex] \notag
\end{gather}
\hspace{1.5cm} subject to \\[-3ex]
\begin{align}
  \sum_{g\in S(p)} x_{g}\ & \ge 1 & \forall p\in N \notag \\
  \notag
  x_g\ & \geq 0 & \forall g\in G
\end{align}

Let $x^*$ be an optimal fractional solution to \eqref{LP:discrete}. As in the
one-sided case we will let the solution $x$ dictate which points should be
self-guarded and which should be guarded by others. Define
\begin{equation*}
  A_0 = \left \{p\in N \cap G \mid \textstyle x_{p}^* \ge
    \frac15 \right\}.
\end{equation*}

We place guards at $A_0$ at a cost of most
\begin{equation}
   \label{eq:self-cost}
   w(A_0) \leq 5 \sum_{g \in A_0} w_g x^*_g.
\end{equation}

Let $N'$ be the set of points in $N$ not seen by $A_0$ and let $G' = G
\setminus A_0$. We will construct a fractional solution for the one-sided
guarding problem for $N'$ and $G_L=G_R=G'$. For each $g \in G'$
let $x_{g,L} = x_{g,R} = \frac{5}{4} x^*_g$. The fractional solution $x$ is
feasible for \eqref{LP:one-sided} since
\begin{equation*}
  \sum_{g\in S_L(p)\cap G_L} x_{g,L}\ + \sum_{g \in S_R(p)\cap G_R} x_{g,R} = \frac{5}{4} \sum_{g \in S(p)
  \setminus \{ p \}} x^*_g \geq  \frac54 \left(1 - \frac15\right) = 1,
\end{equation*}
for all $p\in N'$.
Let $(A^*_L,A^*_R)$ be the solution found for the one-sided problem. The
cost of these sets of guards is guaranteed to be at most twice that of $x$,
which in turn is $\frac52 \sum_{g \in G \setminus
  A_0} w_g x_g^*$. Thus the overall cost is
\begin{equation}
   \label{eq:one-sided-cost}
   w(A^*_L) + w(A^*_R) \leq 5 \sum_{g \in G \setminus A_0} w_g x^*_g.
\end{equation}

Hence, the second part of Theorem~\ref{thm:discrete} follows from
\eqref{eq:self-cost} and~\eqref{eq:one-sided-cost}. Finally, for the partial
version we note the proof of Proposition~\ref{prop:partial} uses as a lower
bound the cost of the optimal fractional solution, so cost of the solution
returned can still be related to the cost of $x$, which is necessary to get
the stated approximation guarantee.

\section{Conclusion}

We gave a 4-approximation for the continuous 1.5D terrain guarding problem as
well as several variations of the basic problem. Our results rely, either
explicitly or implicitly, on the LP formulation~\eqref{LP:discrete} for the
discrete case. Very recently, King~\cite{conf/cccg/King08} showed that the VC
dimension of the discrete case is exactly 4. More precisely, he showed a
terrain with 4 guards and 16 points (these sets are disjoint) such that each
point is seen by a different subset of the guards. If we have to cover the
points that are seen by pairs of guards, we get precisely a vertex cover
problem on the complete graph with 4 vertices. An integral solution must pick
3 vertices, while a fractional solution can pick a half of all vertices. It
follows that
the integrality gap of~\eqref{LP:discrete} is at least $3/2$, even when $G
\cap N =
\emptyset$. On the other hand, our analysis shows that the gap is at most 4. We leave as an open problem to determine the exact integrality gap of \eqref{LP:discrete}.


\bibliographystyle{abbrv}
\bibliography{paper}

\end{document}